\documentclass[runningheads]{llncs}
\usepackage{tikz}
\usetikzlibrary{arrows.meta,positioning}
\usetikzlibrary{backgrounds,calc}
\usetikzlibrary{patterns}
\usepackage[T1]{fontenc}
\usepackage[ruled,linesnumbered,vlined]{algorithm2e}

\newcommand{\ctxedge}{\,\tikz[baseline=-0.5ex]{\draw[blue, dashed, ->, thick] (0,0) -- (0.7,0);}\,}
\newcommand{\fbedge}{\,\tikz[baseline=-0.5ex]{\draw[red, dotted, ->, thick] (0,0) -- (0.7,0);}\,}

\usepackage{graphicx}
\usepackage[table]{xcolor}
\usepackage{array}
\usepackage{booktabs}
\usepackage{listings}
\usepackage{tcolorbox}
\usepackage{tabularx}
\usepackage{amsmath}
\usepackage{amssymb}
\lstdefinelanguage{JavaScript}{
  keywords={function,let,if,return},
  keywordstyle=\bfseries,
  sensitive=true,
  comment=[l]{//},
  morestring=[b]',
  morestring=[b]"
}
\definecolor{hatchgray}{RGB}{210,210,210}

\definecolor{nodegray}{HTML}{F2F2F2}
\definecolor{compgreen}{HTML}{DFF5DF}
\definecolor{ctxblue}{HTML}{3646B7}
\definecolor{fbred}{HTML}{C96573}

\newcommand{\nodelabel}[2]{%
  \begin{tabular}{c}
    \texttt{#1}\\[-0.25em]
    \scriptsize $#2$
  \end{tabular}%
}
\newcommand{\purpmark}{\tikz[baseline=-0.6ex]{\node[circle, inner sep=2pt, fill=purple!70!black] {};}}
\newcommand{\greenmark}{\tikz[baseline=-0.6ex]{\node[circle, inner sep=2pt, fill=green!50!black] {};}}

\begin{document}

\title{Agentic Interpretation: Lattice-Structured Evidence for LLM-Based Program Analysis}
\titlerunning{Agentic Interpretation for Program Analysis}
% If the paper title is too long for the running head, you can set
% an abbreviated paper title here
%
\author{
  Jacqueline L. Mitchell 
  \and
  Chao Wang 
\institute{
University of Southern California, Los Angeles, California, USA
}
}
\authorrunning{Mitchell and Wang}
% First names are abbreviated in the running head.
% If there are more than two authors, 'et al.' is used.
%
%
\maketitle
\begin{abstract}
Large language models can consult information that fixed static analyzers cannot, such as documentation, current security advisories, version-specific metadata, and informal API contracts.
This makes LLMs a compelling option for program analyses that depend on information beyond the source program, or that are otherwise not amenable to conventional static analyzers.
However, directly asking an LLM for a one-shot whole-program analysis is brittle because it compresses many evidence-dependent judgments into a single opaque answer, rather than exposing which conclusions are supported or disputed and using intermediate findings to guide later, more focused searches.
In this paper, we propose \emph{agentic interpretation}, a framework that brings the discipline of lattice-based static analysis to LLM-driven program reasoning.
At a high level, agentic interpretation decomposes a high-level analysis goal into localized claims, and tracks the LLM's judgment about each claim in a finite-height lattice.  A worklist algorithm governs how claims and their judgments evolve during the analysis.
We introduce a formal model of agentic interpretation, explore the design space it opens, and illustrate the approach with a worked example analyzing code that depends on opaque third-party components.

\keywords{Program analysis \and Large language models}
\end{abstract}

\section{Introduction}

Large language models (LLMs) have become widely used in program analysis and verification. 
The dominant paradigm is \emph{generate-and-check}: an LLM proposes a formal artifact (e.g., an invariant~\cite{DBLP:conf/icml/PeiBSSY23,wei2026quokka}, proof step~\cite{DBLP:conf/sigsoft/FirstRRB23,DBLP:conf/iclr/ChenL0GYLMYD00L25}, or candidate specification~\cite{DBLP:journals/pacmse/EndresFCL24,DBLP:journals/corr/abs-2510-12702}) and an external verifier decides whether the artifact is acceptable.
LLMs have also been used as heuristic oracles to guide program analysis, where the LLM can only make decisions that do not affect soundness~\cite{DBLP:conf/iclr/0001BN24,stechly2025on,DBLP:journals/pacmpl/CaiHSLLSD25,DBLP:journals/corr/abs-2508-14532,DBLP:conf/iclr/Li0N25,DBLP:journals/corr/abs-2412-14399}. 
These paradigms are compelling because the correctness argument is independent of the LLM.
However, these paradigms work precisely because the LLM's contribution can be formally validated. 
Many important program analysis questions do not admit such validation, either because designing a suitable verifier is prohibitively difficult or because the relevant evidence resists formalization. 
An analyst may want to know whether an external parser is safe, whether a recently disclosed vulnerability affects the version in use, or whether a call respects an undocumented API convention.
These questions concern program structure, possible flows, and component interactions, but the evidence needed to answer them is not fully contained in the source code and is often not available as a formal specification.
Instead, it may come from documentation, package metadata, current security advisories, informal specifications, engineering conventions, or a mixture of these. We refer to such problems as \emph{evidence-dependent program analyses}, and Figure~\ref{fig:use-cases} presents representative members of this category.
LLM agents are a natural fit for these analyses because they can retrieve, interpret, and synthesize exactly the kinds of information that fixed analyzers cannot access. 
Unlike a conventional analyzer, whose mechanics are defined at design time, an LLM can incorporate ecosystem knowledge that changes after the analyzer is built and reason about real-time information and qualitative aspects of the code.

\begin{figure}[t]
    \centering
    \scalebox{0.8}{    
    \includegraphics[width=\linewidth]{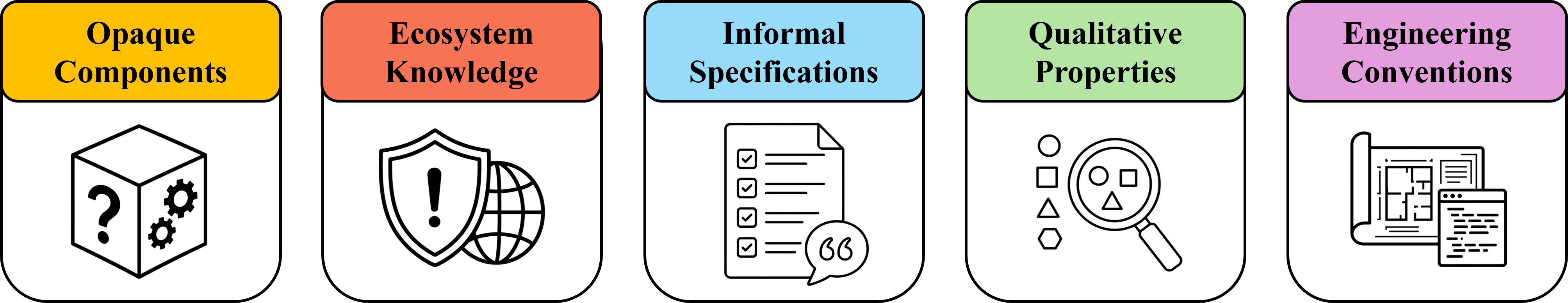}
    }
    \caption{Five Target Categories for Agentic Interpretation}
    \label{fig:use-cases}
\end{figure}

Asking an LLM to do the analysis directly is brittle for three reasons.
First, empirical studies show that model performance can depend on where relevant information appears in the prompt, and can degrade as context length and task complexity increase~\cite{DBLP:journals/tacl/LiuLHPBPL24,DBLP:conf/emnlp/DuTRRBGWSHP25,DBLP:conf/acl/TianLFDLQWCZWLW25,DBLP:conf/acl/YuJLWL000Q25}.
This means that a version-specific advisory or an informal library contract may be supplied to the model but not used reliably.
Second, a monolithic answer has no persistent state. 
Even when it contains plausible local observations about the code, it may not record which claims were supported, refuted, left open, or supported only by weakly applicable evidence. 
This is especially problematic because LLM-generated reasoning can be self-contradictory or fail to support its own conclusions~\cite{DBLP:conf/nips/TurpinMPB23,DBLP:conf/acl/LiC0XLJLZ25,DBLP:conf/emnlp/TutekCMB25,DBLP:conf/acl/0002Y025}.
Third, a one-shot answer has no explicit dependency structure.
Later findings may reveal that an earlier question was scoped incorrectly or that an upstream component should be reexamined under a more focused hypothesis, but an unstructured answer has no mechanism for such feedback.

This paper proposes \emph{agentic interpretation} as a middle ground between ad hoc LLM prompting and a fully specified static analysis.
The framework decomposes a high-level analysis goal and a graph-based representation of a program into localized claims anchored to program points, component boundaries, and auxiliary analysis nodes.
Each claim is assessed by an LLM agent under controlled context.  The resulting judgment is recorded in a finite-height assessment lattice, and a worklist-based harness governs how claims and judgments evolve and how information is propagated between nodes.
The lattice is not an abstract domain over concrete program states in the classical sense.
Instead, it is a structured domain over evidence states: a lattice element records the current evidential status of a claim (e.g., whether the claim has weak or strong support).

The goal of agentic interpretation is to achieve a separation of concerns. The role of the LLM agent is to provide evidence by retrieving and interpreting information that a conventional static analyzer would not naturally consult. The role of the lattice-based worklist harness is to ensure that the analysis proceeds in a controlled and principled way that is bounded and auditable.
In summary, this paper makes three contributions: (1)~it proposes the framework of agentic interpretation and identifies a suitable problem domain for it, (2)~formalizes the core model, and (3)~characterizes and discusses the design space of agentic interpretation.
Moreover, this paper describes work in progress.  A practical implementation and experimental evaluation of the system are left to future work.
\section{Example: Opaque Component Security Review}

This section illustrates agentic interpretation on a security review task involving a small program with opaque third-party components.
The program (Figure~\ref{fig:running-ex}) is a request handler that parses untrusted input using an external JSON library, verifies a cryptographic signature using an external signing library, and processes the resulting payload if the check succeeds.
The source code of the handler and the processor is available to the analyzer, but the source code of the parser and the verifier is \textbf{not}. 
(We note that the library names and security advisories used in the example are synthetic.)
\begin{figure}[t]
\begin{lstlisting}[language=JavaScript, numbers=left, numberstyle=\tiny, basicstyle=\ttfamily\scriptsize, columns=flexible, escapeinside={(*}{*)}]
// Source available
function handleRequest(rawInput) {
  let payload = parseJSON(rawInput);     // json-fast
  if (verifySignature(payload)) {        // sign-lib
    return processPayload(payload);
  }
  return reject(rawInput);
}

// Source available
function processPayload(payload) {
  const amount = payload.amount;
  const recipient = payload.recipient;
  const accountId = payload.accountId;
  return submitTransfer(accountId, recipient, amount);
}

// Third-party sources unavailable:
// parseJSON(input): json-fast v3.2.1
// verifySignature(payload): sign-lib v2.0.4
// declared verification scope supplied by external configuration metadata
\end{lstlisting}
\caption{A handler with two opaque third-party components. The source of \texttt{parseJSON} and \texttt{verifySignature} is unavailable to the analyzer; the source of \texttt{handleRequest} and \texttt{processPayload} is available.}
\label{fig:running-ex}
\end{figure}
Suppose that a human analyst provides the high-level security goal to check:
\begin{quote}
    \textbf{Goal: } \texttt{handleRequest} should either reject the input or call \texttt{processPayload} only with transfer-relevant fields that are signed, schema-valid, own properties of \texttt{payload}, and protected against parser-induced object-shape attacks.
\end{quote}
Here, ``schema-valid'' means valid with respect to an analyst-supplied transfer-payload contract that identifies \texttt{amount}, \texttt{recipient}, and \texttt{accountId} as transfer-relevant fields and specifies their expected types and value constraints. 
This is an evidence-dependent analysis problem because the goal depends on the ecosystem state of \texttt{json-fast}~v3.2.1 (e.g., documentation, advisories), on the informal contract of \texttt{verifySignature}, and on application-specific relevance determined by downstream field use.

\subsection{Evaluation Graph and Claims}

\begin{figure}[t]
\centering
\begin{minipage}{\linewidth}
\centering
\scalebox{0.7}{
\begin{tikzpicture}[
  flow/.style={-{Stealth[length=2.2mm]}, line width=0.55pt},
  ctx/.style={-{Stealth[length=2.2mm]}, dashed, line width=0.65pt, draw=ctxblue},
  fb/.style={-{Stealth[length=2.2mm]}, dotted, line width=0.85pt, draw=fbred},
  prog/.style={
    draw,
    rounded corners=3pt,
    fill=nodegray,
    minimum width=2.75cm,
    minimum height=0.72cm,
    inner sep=2pt,
    align=center
  },
  opaque/.style={
    prog,
    postaction={pattern=north east lines, pattern color=hatchgray}
  },
  comp/.style={
    draw,
    rounded corners=3pt,
    fill=compgreen,
    minimum width=3.05cm,
    minimum height=0.86cm,
    inner sep=2pt,
    align=center
  },
  branch/.style={font=\scriptsize, fill=white, inner sep=1pt}
]
  \node[prog] (raw) at (0,0)
    {\nodelabel{rawInput}{n_0}};
  \node[opaque] (parse) at (0,-1.35)
    {\nodelabel{parseJSON}{n_1 : c_P}};
  \node[opaque] (verify) at (0,-2.70)
    {\nodelabel{verifySignature}{n_2 : c_{V_\ell}, c_{V_s}}};
  \node[prog] (process) at (-3.05,-4.35)
    {\nodelabel{processPayload}{n_3 : c_U}};
  \node[prog] (reject) at (3.05,-4.35)
    {\nodelabel{reject}{n_4 : c_R}};
  \node[prog] (exit) at (0,-5.55)
    {\nodelabel{exit}{n_5 : c_G}};
  \node[comp] (compose) at (4.15,-2.70)
    {\nodelabel{composition}{n_C : c_C}};
  % Program-control edges
  \draw[flow] (raw) -- (parse);
  \draw[flow] (parse) -- (verify);
  \draw[flow]
    (verify.south west)
    -- node[branch, pos=0.35, above left] {true}
    (process.north east);
  \draw[flow]
    (verify.south east)
    -- node[branch, pos=0.35, above right] {false}
    (reject.north west);
  \draw[flow] (process.south east) -- (exit.north west);
  \draw[flow] (reject.south west) -- (exit.north east);
  % Context edges into composition
  \draw[ctx]
    (parse.east)
    to[out=4,in=172]
    (compose.west);
  \draw[ctx]
    (verify.east)
    -- (compose.west);
  \draw[ctx]
    (process.east)
    to[out=12,in=220]
    (compose.south west);
  \coordinate (cgA) at ($(compose.south east)+(0.65,-0.10)$);
  \coordinate (cgB) at ($(compose.south east)+(1.10,-2.35)$);
  \coordinate (cgC) at ($(exit.east)+(0.95,0.20)$);
  \coordinate (rgA) at ($(reject.south east)+(0.35,-0.10)$);
  \coordinate (rgB) at ($(reject.south east)+(0.85,-1.10)$);
  \coordinate (rgC) at ($(exit.east)+(0.55,-0.20)$);
  % Context edges into global goal at exit
  \draw[ctx, rounded corners=6pt]
    (compose.south east) -- (cgA) -- (cgB) -- (cgC) -- (exit.east);
  \draw[ctx, rounded corners=6pt]
    (reject.south east) -- (rgA) -- (rgB) -- (rgC) -- (exit.east);
  % Feedback edges
  \draw[fb]
    (compose.west)
    to[out=158,in=12]
    (parse.east);
  \draw[fb]
    (compose.west)
    to[out=202,in=0]
    (verify.east);
\end{tikzpicture}
}
\end{minipage}

\vspace{0.8em}

\begin{minipage}{\linewidth}
\centering
\renewcommand{\arraystretch}{1.15}
\scalebox{0.7}{
\begin{tabularx}{\linewidth}{@{}c c >{\raggedright\arraybackslash}X@{}}
\toprule
Node & Claim & Meaning \\
\midrule
$n_1$
  & $c_P$
  & \texttt{parseJSON} does not introduce attacker-controlled inherited
    properties or other object-shape corruption relevant to consumed fields. \\

$n_2$
  & $c_{V_\ell}$
  & \texttt{verifySignature} correctly validates signatures over the fields
    it is asked to verify. \\

$n_2$
  & $c_{V_s}$
  & The declared verification scope covers all security-relevant fields
    consumed downstream and constrains the relevant field provenance. \\

$n_3$
  & $c_U$
  & \texttt{processPayload} independently checks that consumed fields are
    own properties with expected types and schema. \\

$n_4$
  & $c_R$
  & Failed signature verification leads to rejection of the input. \\

$n_C$
  & $c_C$
  & The true-branch composition processes only signed, schema-valid, and
    structurally safe security-relevant fields. \\

$n_5$
  & $c_G$
  & \texttt{handleRequest} either rejects the input or calls \texttt{processPayload} only with transfer-relevant fields that are signed, schema-valid, own properties of payload, and protected against parser-induced object-shape attacks. \\
\bottomrule
\end{tabularx}
}
\end{minipage}
\caption{
Evaluation graph and claims for the motivating example.
Context edges (\protect\ctxedge) flow from component claims to the cross-component claim $(c_C)$ and from $c_C$ and $c_R$ to $c_G$.
Feedback edges (\protect\fbedge) let a refuted cross-component claim focus later searches about opaque components.
The table summarizes the claims associated with each node.
Hashed nodes indicate that the nodes represent opaque components.
}
\label{fig:motivating-example}
\end{figure}

The subgraph with grey nodes and solid edges represents the program-derived graph for the code in Figure~\ref{fig:running-ex}.
Each grey node represents a program point or a component boundary (e.g., $n_3$ represents \texttt{processPayload} and $n_5$ represents the exit point), and each black edge represents control flow.
This graph captures the structure of the program, but does not capture the dependencies that arise when evaluating the analyst's security goal.
In particular, the goal depends on how multiple components interact (e.g., the true branch represented by the path from $n_1$ to $n_3$), and downstream findings about that interaction may need to redirect the investigation of upstream components.
The evaluation graph extends the program-derived graph with four types of additions to represent these dependencies:

\paragraph{Claims.} The evaluation graph associates each node with one or more claims: propositions that the framework asks the LLM agent to evaluate by finding evidence.
We note that the assessment of the verifier is split into two claims, where $c_{V_\ell}$ concerns local cryptographic correctness, while $c_{V_s}$ concerns whether the declared scope covers all security-relevant fields consumed downstream.
This split reflects the fact that a verifier can be correct within its own contract while providing insufficient coverage for the handler's security goal.
The table in Figure~\ref{fig:motivating-example} describes all of the claims in detail.

\paragraph{Auxiliary Nodes.} Some claims span multiple components and cannot easily be anchored to a specific node in the program-derived graph.  The security goal requires that the parser, verifier, and processor \emph{jointly} ensure that only signed, schema-valid, and structurally safe fields reach the processor.
An auxiliary node $n_C$ (the green node in Figure~\ref{fig:motivating-example}) is introduced to represent this interaction and carries a cross-component claim spanning the true-branch components.

\paragraph{Context Edges.} When the LLM agent evaluates a claim at one node, it may need the information and context from other nodes.  Context edges ($\ctxedge$) make this dependency explicit.
In Figure~\ref{fig:motivating-example}, $n_1 \ctxedge n_C$, $n_2 \ctxedge n_C$, and $n_3 \ctxedge n_C$ ensure that when $c_C$ at $n_C$ is evaluated, the agent receives the relevant information from the parser, verifier, and processor claims.
Similarly, edges $n_4 \ctxedge n_5$ and $n_C \ctxedge n_5$ supply the goal claim at the exit node with the context from the rejection and composition claims.
Context edges flow forward in the sense that upstream findings inform downstream evaluation.

\paragraph{Feedback Edges} Later findings may reveal that an earlier question was scoped too broadly or that an upstream component should be reexamined under a more focused hypothesis.
Feedback edges ($\fbedge$) propagate this signal in the reverse direction.  When a downstream assessment changes, the upstream node is scheduled to be investigated further.
The edges $n_C \fbedge n_1$ and $n_C \fbedge n_2$ allow a refuted cross-component claim to trigger retargeted queries about the opaque parser and verifier (e.g., by narrowing a broad documentation review toward a specific class of vulnerability).
There is no feedback edge from $n_C$ to $n_3$, because the processor's source code is available, so the lack of validation can be assessed directly instead of searching for external evidence.
Here feedback edges focus on the opaque components only, as the source code of other nodes can be directly inspected.

Both auxiliary and program-derived nodes participate in the worklist computation, while only the feedback and context edges drive the worklist computation.

\subsection{A Graded Assessment Domain}
\label{sec:graded-domain}
For the running example, each claim is assessed in the product lattice:
\[
A_{\textsc{Graded}} = \{\bot, w, s\} \times \{\bot, w, s\},
\] where $\bot \sqsubseteq w \sqsubseteq s$.  The full lattice is depicted in Figure~\ref{fig:graded-lattice}. Here, the strength of evidence is classified as absent ($\bot$), weak ($w$), or strong ($s$).
The first coordinate represents supporting evidence and the second represents refuting evidence, and the order is pointwise.
For example, $\langle w, s \rangle$ means that weak supporting evidence and strong refuting evidence for a given claim has been observed by the LLM agent.
The lattice value is deliberately compact, and we assume that the worklist harness separately stores evidence records (e.g., such as source observations, documentation excerpts, advisory identifiers).
This separation lets the worklist use a small ordered domain while preserving enough evidence to audit the result and construct later prompts.

\begin{figure}[t]
\centering
\scalebox{0.85}{
\begin{tikzpicture}[
  every node/.style={inner sep=3pt}
]
  \node (bot)  at (0,0)      {$\langle\bot,\bot\rangle$};
  \node (w0)   at (-1.4,1.2) {$\langle w,\bot\rangle$};
  \node (0w)   at (1.4,1.2)  {$\langle\bot,w\rangle$};
  \node (s0)   at (-2.4,2.4) {$\langle s,\bot\rangle$};
  \node (ww)   at (0,2.4)    {$\langle w,w\rangle$};
  \node (0s)   at (2.4,2.4)  {$\langle\bot,s\rangle$};
  \node (sw)   at (-1.4,3.6) {$\langle s,w\rangle$};
  \node (ws)   at (1.4,3.6)  {$\langle w,s\rangle$};
  \node (top)  at (0,4.8)    {$\langle s,s\rangle$};

  \draw (bot) -- (w0);
  \draw (bot) -- (0w);
  \draw (w0) -- (s0);
  \draw (w0) -- (ww);
  \draw (0w) -- (ww);
  \draw (0w) -- (0s);
  \draw (s0) -- (sw);
  \draw (ww) -- (sw);
  \draw (ww) -- (ws);
  \draw (0s) -- (ws);
  \draw (sw) -- (top);
  \draw (ws) -- (top);
\end{tikzpicture}
}
\caption{The $A_{\textsc{Graded}}$ lattice. Each component takes values from $\{\bot, w, s\}$, ordered $\bot \sqsubseteq w \sqsubseteq s$. Joins are pointwise maxima. A state such as $\langle w,s\rangle$ records weak support and strong refutation; it does not mean the claim is both true and false.}
\label{fig:graded-lattice}
\end{figure}

\subsection{Contextualization, Queries, and Evidence Records}
\label{sec:context-queries-evidence}

When the worklist harness processes a node, it constructs local context from the source code of neighboring nodes, the goal, the current assessment state, and information from predecessors via extended edges (i.e., the feedback and context edges).
For example, for the cross-component node $n_C$, the context includes the source of \texttt{handleRequest} and \texttt{processPayload}, the goal, and the current assessments and evidence for $c_P$ $(n_1)$, $c_{V_\ell}$ $(n_2)$, $c_{V_s}$ $(n_2)$, and $c_U$ $(n_3)$.
A bilateral evaluation for $c_C$ can be implemented by the following two subqueries.  
A key point is that the structure of the subqueries reflects the structure of the assessment lattice.  The result of each of these subqueries will determine the value of the corresponding coordinate in the lattice element for $c_C$.

\begin{tcolorbox}[
  title={\small\textbf{Bilateral evaluation query for $c_C@n_C$}},
  colback=gray!3,
  colframe=gray!50!black,
  fontupper=\small,
  boxrule=0.5pt,
  arc=2pt,
  left=4pt, right=4pt, top=2pt, bottom=2pt
]
Given the context and claim below, evaluate the claim using the $A_{\textsc{Graded}}$ evidence lattice.

\medskip
\textbf{Claim:} The true-branch composition processes only signed, schema-valid, and structurally safe security-relevant fields.

\medskip
\textbf{Context:} The source of \texttt{handleRequest} and \texttt{processPayload}, the goal $\mathbf{Goal}$, and the current assessments (and associated evidence) of $c_P$, $c_{V\ell}$, $c_{Vs}$, and $c_U$.

\medskip
Find evidence that \textcolor{green!40!black}{\textbf{supports}} the claim. In particular, check whether the parser, verifier, and processor jointly establish signature coverage, own-property status, schema validity, and protection against object-shape attacks. Classify the support level as $\bot$ (absent), $w$ (weak), or $s$ (strong).

\medskip
Find evidence that \textcolor{red!40!black}{\textbf{refutes}} the claim. In particular, identify any downstream field assumption that is not established by the parser and verifier jointly, or any component behavior that could undermine signature coverage, field provenance, schema validity, or structural safety. Classify the refutation level as $\bot$ (absent), $w$ (weak), or $s$ (strong).

\medskip
Use \emph{w} when the evidence is indirect, incomplete, generic, omission-based, or not clearly applicable to the specific version, code path, API contract, or security obligation under analysis. Use \emph{s} when the evidence is direct, specific, applicable, and sufficient to support or refute the claim in the given context.

\medskip
Return the support level, supporting evidence records, refutation level, refuting evidence records, and a brief rationale.
\end{tcolorbox}

\subsection{Evidence Stabilization: One Worklist Trace}

Table~\ref{tab:trace} summarizes one possible run of the worklist harness while analyzing the code in Figure~\ref{fig:running-ex}.
Each step in Table~\ref{tab:trace} processes one node from the worklist.
The harness assembles a prompt context for the node (including source code, the high-level goal, and the current assessments at extended predecessors), queries the LLM agent, and joins the returned assessment with the node's current value. 
If the join produces a strict increase, the node's extended successors are enqueued.
A dot ($\cdot$) indicates that a claim's assessment is unchanged from the previous row.
The \emph{Join} column shows the join expression for the claim that changed, and $W$ lists the worklist contents after the step.
The initial worklist is $W_0 = \{n_1, n_2, n_3, n_4, n_C, n_5\}$.
The auxiliary node~$n_C$ and the exit node~$n_5$ are included despite the fact that their claims depend on context from upstream nodes to ensure that every claim is evaluated at least once.
We now discuss the reasoning behind each step seen in Table~\ref{tab:trace}, grouping the steps into four phases for presentation.
 
\newcommand{\lightrule}{\arrayrulecolor{black!25}\midrule\arrayrulecolor{black}}
\begin{table}[t]
\centering
\caption{State evolution for one run.  Columns $c_P$ through
$c_G$ show the assessment after each step; a dot indicates no change.
The \emph{Join} column gives the join expression for the claim that
changed, and $W$ lists the worklist contents after the step. Here, $\bot^2$ is shorthand
for $\langle \bot, \bot \rangle$ in the interest of space.}
\label{tab:trace}
 
\resizebox{\textwidth}{!}{%
\renewcommand{\arraystretch}{1.15}
\begin{tabular}{@{}cl l ccccccc l l@{}}
\toprule
Step & Node & Action
  & $c_P$ & $c_{V\ell}$ & $c_{Vs}$ & $c_U$ & $c_R$ & $c_C$ & $c_G$
  & Join & $W$ after step \\
\midrule
0  & ---   & init
   & $\bot^2$ & $\bot^2$ & $\bot^2$ & $\bot^2$ & $\bot^2$ & $\bot^2$ & $\bot^2$
   & ---
   & $\{n_1,n_2,n_3,n_4, n_C, n_5\}$ \\
\midrule
\multicolumn{12}{@{}l}{\textit{Phase I: Initial Component Assessment}} \\
\addlinespace[2pt]
1  & $n_1$ & broad parser review
   & $\langle w,\bot\rangle$ & $\cdot$ & $\cdot$ & $\cdot$ & $\cdot$ & $\cdot$ & $\cdot$
   & $\bot^2 \sqcup \langle w,\bot\rangle$
   & $\{n_2,n_3,n_4,n_C,n_5\}$ \\
\lightrule
2  & $n_2$ & broad verifier review
   & $\cdot$ & $\langle s,\bot\rangle$ & $\langle\bot,w\rangle$ & $\cdot$ & $\cdot$ & $\cdot$ & $\cdot$
   & $\begin{array}{@{}l@{}} c_{V\ell}\colon \bot^2 \sqcup \langle s,\bot\rangle \\ c_{Vs}\colon \bot^2 \sqcup \langle\bot,w\rangle \end{array}$
   & $\{n_3,n_4,n_C, n_5\}$ \\
\lightrule
3  & $n_3$ & inspect processor src.
   & $\cdot$ & $\cdot$ & $\cdot$ & $\langle\bot,s\rangle$ & $\cdot$ & $\cdot$ & $\cdot$
   & $\bot^2 \sqcup \langle\bot,s\rangle$
   & $\{n_4,n_C, n_5\}$ \\
\lightrule
4  & $n_4$ & inspect rejection branch
   & $\cdot$ & $\cdot$ & $\cdot$ & $\cdot$ & $\langle s,\bot\rangle$ & $\cdot$ & $\cdot$
   & $\bot^2 \sqcup \langle s,\bot\rangle$
   & $\{n_C,n_5\}$ \\
\midrule
\multicolumn{12}{@{}l}{\textit{Phase II: First Composition}} \\
\addlinespace[2pt]
5  & $n_C$ & compose current evidence
   & $\cdot$ & $\cdot$ & $\cdot$ & $\cdot$ & $\cdot$ & $\langle\bot,w\rangle$ & $\cdot$
   & $\bot^2 \sqcup \langle\bot,w\rangle$
   & $\{n_5,n_1,n_2\}$ \\
\lightrule
6  & $n_5$ & compose whole-goal
   & $\cdot$ & $\cdot$ & $\cdot$ & $\cdot$ & $\cdot$ & $\cdot$ & $\langle\bot,w\rangle$
   & $\bot^2 \sqcup \langle\bot,w\rangle$
   & $\{n_1,n_2\}$ \\
\midrule
\multicolumn{12}{@{}l}{\textit{Phase III: Feedback-Driven Refinement}} \\
\addlinespace[2pt]
7  & $n_1$ & targeted parser search
   & $\langle w,s\rangle$ & $\cdot$ & $\cdot$ & $\cdot$ & $\cdot$ & $\cdot$ & $\cdot$
   & $\langle w,\bot\rangle \sqcup \langle\bot,s\rangle$
   & $\{n_2,n_C\}$ \\
\lightrule
8  & $n_2$ & re-check verifier scope
   & $\cdot$ & $\cdot$ & $\langle\bot,s\rangle$ & $\cdot$ & $\cdot$ & $\cdot$ & $\cdot$
   & $\langle\bot,w\rangle \sqcup \langle\bot,s\rangle$
   & $\{n_C\}$ \\
\midrule
\multicolumn{12}{@{}l}{\textit{Phase IV: Final Composition \& Stabilization}} \\
\addlinespace[2pt]
9  & $n_C$ & compose revised context
   & $\cdot$ & $\cdot$ & $\cdot$ & $\cdot$ & $\cdot$ & $\langle\bot,s\rangle$ & $\cdot$
   & $\langle\bot,w\rangle \sqcup \langle\bot,s\rangle$
   & $\{n_5,n_1,n_2\}$ \\
\lightrule
10 & $n_5$ & compose revised goal
   & $\cdot$ & $\cdot$ & $\cdot$ & $\cdot$ & $\cdot$ & $\cdot$ & $\langle\bot,s\rangle$
   & $\langle\bot,w\rangle \sqcup \langle\bot,s\rangle$
   & $\{n_1,n_2\}$ \\
\lightrule
11 & $n_1$ & reprocess (no change)
   & $\cdot$ & $\cdot$ & $\cdot$ & $\cdot$ & $\cdot$ & $\cdot$ & $\cdot$
   & $\langle w,s\rangle \sqcup \langle\bot,s\rangle = \langle w,s\rangle$
   & $\{n_2\}$ \\
\lightrule
12 & $n_2$ & reprocess (no change)
   & $\cdot$ & $\cdot$ & $\cdot$ & $\cdot$ & $\cdot$ & $\cdot$ & $\cdot$
   & $\begin{array}{@{}l@{}} c_{V\ell}\colon \langle s,\bot\rangle \sqcup \langle s,\bot\rangle = \langle s,\bot\rangle \\ c_{Vs}\colon \langle\bot,s\rangle \sqcup \langle\bot,s\rangle = \langle\bot,s\rangle \end{array}$
   & $\emptyset$ \\
\bottomrule
\end{tabular}%
}
\end{table}
 
\paragraph{Phase~I: Initial Component Assessment (Steps 1--4).}
 
In this phase, each component node is evaluated independently, before
any cross-component context is available.
 
\smallskip\noindent\textbf{Step~1} ($c_P@n_1$).  The agent reviews the documentation for \texttt{json-fast}~v3.2.1, which describes standard JSON parsing with robust error handling.  
Without specific guidance on what threat to search for, a broad advisory search does not surface version-specific vulnerabilities.
The agent records weak support for~$c_P$, and would add $n_C$ to the worklist if it was not present, because the assessment of $c_P$ has changed.
 
\smallskip\noindent\textbf{Step~2} ($c_{V\ell}$, $c_{Vs}$ at~$n_2$). 
The agent checks the documentation for \texttt{verifySignature} from \texttt{sign-lib}, which indicates that the verifier checks signatures over a declared set of fields, which the agent treats as strong support for local correctness~($c_{V\ell}$).
For verification scope~($c_{Vs}$), however, the same documentation is \emph{silent} on inherited properties, schema conformance, and fields outside the declared set.
This is an argument by omission: the documentation's silence is treated as evidence against scope coverage, but is weaker than a positive demonstration of a gap, since the documentation may simply not mention a property that is in fact enforced.
The agent accordingly records only weak refutation for~$c_{Vs}$.
 
\smallskip\noindent\textbf{Step~3} ($c_U@n_3$). The agent inspects the source code of \texttt{processPayload} and observes that it accesses payload fields directly (e.g., \texttt{payload.amount}) without performing own-property, type, or schema checks.
This is a direct code observation which yields strong refutation of~$c_U$.
 
\smallskip\noindent\textbf{Step~4} ($c_R@n_4$).  The agent inspects the source and confirms that the false branch of the \texttt{verifySignature} conditional calls \texttt{reject(rawInput)}, yielding strong support for~$c_R$.
This would add~$n_5$ to the worklist, if it was not present.
 
\paragraph{Phase~II: First Composition (Steps 5--6).}
 
With the component assessments in place, the framework now evaluates the cross-component $(c_C)$ and global goal $(c_G)$ claims for the first time.
 
\smallskip\noindent\textbf{Step~5} ($c_C@n_C$).  The agent composes the current true-branch evidence.
The context flowing into~$n_C$ includes
$c_P$ at~$\langle w, \bot\rangle$ (parser weakly supported), $c_{V\ell}$ at~$\langle s, \bot\rangle$ (local correctness strongly
supported), $c_{Vs}$ at~$\langle\bot, w\rangle$ (scope weakly refuted), and $c_U$ at~$\langle\bot, s\rangle$ (processor performs no validation).
These assessments reveal a gap: the processor does not validate consumed fields, and the verifier's documented contract is narrower than the handler's security goal.
However, the agent records only \emph{weak} refutation of~$c_C$ at this stage.
The composition claim asks whether the components \emph{jointly} establish the required security properties, and the upstream picture is still mixed: the parser claim has not yet been refuted, the verifier scope refutation rests on an omission argument rather than a positive finding, and it remains possible that properties enforced by the parser and verifier together compensate for the processor's lack of checks.
No single component has yet been shown to positively fail in a way that closes off this possibility.  Since $c_C$~has changed, the feedback edges $n_C \fbedge n_1$ and $n_C \fbedge n_2$ add the parser and verifier nodes to the worklist for reinvestigation.
 
\smallskip\noindent\textbf{Step~6} ($c_G@n_5$).  The agent evaluates the whole-handler goal.  The rejection branch is strongly supported ($c_R$), but the true-branch composition is weakly refuted ($c_C$).
Because the goal requires \emph{every} modeled execution to satisfy the security conditions, support for the rejection branch alone is insufficient.
The agent records weak refutation of~$c_G$.

\paragraph{Phase~III: Feedback-Driven Refinement (Steps 7--8).}
 
This phase illustrates the framework's advantage over a single-pass analysis.
The composition $(c_C)$'s weak refutation, propagated back to~$n_1$ and~$n_2$ via feedback edges, redirects the upstream investigation from broad documentation review toward targeted threat search.
 
\smallskip\noindent\textbf{Step~7} ($c_P@n_1$, revisited).  The feedback context from~$n_C$ now includes the evidence behind the
composition's refutation: the processor performs no validation, and the verifier's scope does not cover schema conformance or structural safety.
This context reveals that object-shape integrity at the parser is \emph{load-bearing}. If the parser produces an object with inherited or injected properties, no downstream component will catch it.
The agent accordingly narrows its search to prototype pollution and object-shape vulnerabilities in \texttt{json-fast}~v3.2.1, and surfaces a security advisory reporting a pollution bug in nested \texttt{\_\_proto\_\_} handling.
This targeted evidence yields strong refutation of~$c_P$.
 
\smallskip\noindent\textbf{Step~8} ($c_{Vs}@n_2$, revisited).  With the same feedback context as \textbf{Step 7}, the agent re-examines the verifier's scope.
The local correctness of~$c_{V\ell}$ remains supported, but the evidence now \emph{positively} establishes that the verifier's documented scope does not cover schema validity, own-property status, or structural safety for the fields consumed by \texttt{processPayload} after having checked \texttt{sign-lib}'s API contract.
This upgrades~$c_{Vs}$ from weak to strong refutation.

\paragraph{Phase~IV: Final Composition and Stabilization (Steps 9--12).}
 
The strengthened upstream evidence propagates forward through the composition and goal claims, and then the state stabilizes.
 
\smallskip\noindent\textbf{Step~9} ($c_C@n_C$, revised).  The parser now has strong refuting evidence (a concrete advisory), and the verifier
scope has strong refuting evidence (a positive demonstration of insufficient coverage).
The agent determines that this is sufficient to strongly refute the true-branch composition claim.
 
\smallskip\noindent\textbf{Step~10} ($c_G@n_5$, revised).  With the true-branch composition now strongly refuted and the rejection branch still strongly supported, the agent concludes that the global goal has strong refuting evidence: \texttt{handleRequest} can reach \texttt{processPayload} under conditions that do not satisfy the stated security properties.
 
\smallskip\noindent\textbf{Steps~11--12} ($n_1$, $n_2$, reprocessed). Both nodes are revisited because the feedback edges from~$n_C$ added
them to the worklist in Step~9.  However, the feedback context from~$n_C$ has not changed since Step~9. $c_C$ remains at~$\langle\bot,s\rangle$ with the same evidence, so no new evidence is surfaced and all joins are absorbed.
The worklist empties and the analysis terminates. \\

\noindent
An analyst can now check $c_G$ to see what the analysis has determined about the high-level goal and inspect the evidence.
It is assumed that, throughout this process, the relevant evidence collected by the LLM is recorded at each node for each claim, but that the evidence does not directly drive worklist propagation (only changes in assessment do).
\section{The Core Model}
\label{sec:core-model}
This section formalizes agentic interpretation.
The model separates the evaluation graph, the claims and evidence recorded at each node, the assessment lattice, and the worklist harness.
Figure~\ref{fig:agent-interp-flow} summarizes the overall workflow.

\begin{figure}[t]
\centering
\includegraphics[width=0.8\linewidth]{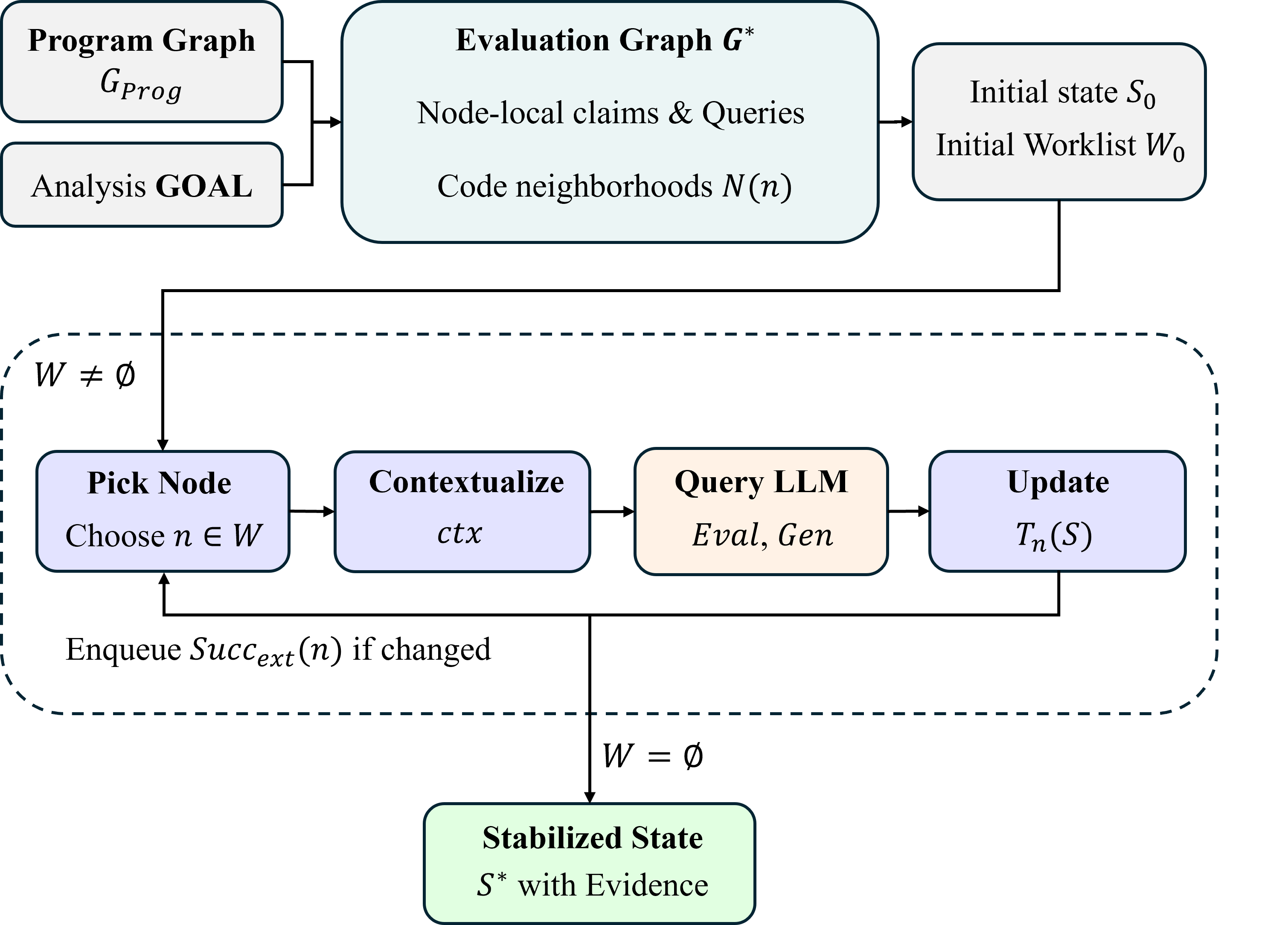}
\caption{
Workflow of agentic interpretation.}
\label{fig:agent-interp-flow}
\end{figure}

\subsection{The Evaluation Graph}

The central data structure of agentic interpretation is the evaluation graph, which extends a program-derived graph with additional edges and nodes that capture analysis-level dependencies.

\begin{definition}[Program-derived graph]
A program-derived graph is a finite directed graph
\[
G_{\mathrm{prog}} = (V_{\mathrm{prog}}, E_{\mathrm{prog}})
\]
where \(E_{\mathrm{prog}} \subseteq V_{\mathrm{prog}} \times V_{\mathrm{prog}}\).  Nodes in \(V_{\mathrm{prog}}\) represent program points or program regions, and edges represent a chosen structural relation, such as control flow, call flow, or data flow.
\end{definition}

The evaluation graph augments the program-derived graph with auxiliary nodes, context edges, and feedback edges to capture analysis-level dependencies absent from the program structure. 
A code neighborhood function is also added to determine which program regions (i.e., source code associated with nodes in $V_{prog}$) are visible when processing a given node.
\begin{definition}[Evaluation graph]
Given $G_{\mathit{prog}}=(V_{\mathit{prog}},\,E_{\mathit{prog}})$,
an \emph{evaluation graph} is a tuple
\[
  G^{*} = (V^{*},\; E_{\mathit{prog}}\cup\;E_{\mathit{ctx}}\cup\;
            E_{\mathit{fb}},\; N)
\]
where $V^{*} = V_{\mathit{prog}} \cup V_{\mathit{aux}}$ with
$V_{\mathit{aux}} \cap V_{\mathit{prog}} = \varnothing$;
\;$E_{\mathit{ctx}},\, E_{\mathit{fb}} \subseteq V^{*}\!\times V^{*}$
are the \emph{context-edge} and \emph{feedback-edge} relations; and
$N : V^{*} \to \mathcal{P}(V_{\mathit{prog}})$
is a \emph{code-neighborhood} function.
\end{definition}

Nodes in $V_{aux}$ are analysis nodes that do not correspond to a single program point; in the motivating example $V_{aux} = \{n_C\}$.  
A context edge $(m, n) \in E_{ctx}$ means that the current state (i.e., assessment of claims and evidence) at $m$ should be included when processing $n$.  
A feedback edge $(m, n) \in E_{fb}$ indicates that a change at $m$ should trigger re-evaluation of $n$.
We write $E_{ext} = E_{ctx} \cup E_{fb}$ to represent the \emph{extended edges}, and define predecessors and successors, restricted to nodes that are connected via extended edges:
\[
  \mathit{Pred}_{\mathit{ext}}(n)
    = \{m \mid (m,n)\in E_{\mathit{ext}}\},
  \qquad
  \mathit{Succ}_{\mathit{ext}}(n)
    = \{m \mid (n,m)\in E_{\mathit{ext}}\}.
\]

Program-derived edges do not drive worklist propagation unless they also appear in $E_{\mathit{ext}}$.
The code neighborhood $N(n)$ determines which source regions are visible when processing $n$.
For a program node $n \in V_{prog}$, $N(n)$ might be a k-hop control flow neighborhood, or simply the predecessors via $E_{prog}$.
For an auxiliary node $n \in V_{aux}$, $N(n)$ may span a larger region.
In the example, $N(n_C)$ includes the source of both \texttt{handleRequest} and \texttt{processPayload}.
We assume an implicit source map $\mathit{src}:V_{\mathit{prog}}\to\Sigma^{*}$ so that the prompt constructor can obtain code context for~$n$ by applying $\mathit{src}$ pointwise to $N(n)$.

\subsection{Claims, Assessment Domain, and State}

This subsection discusses what the states in our analysis look like.

\paragraph{Claims and Evidence.} A claim is a proposition to be assessed at a node.
For example, a claim could be a natural language assertion about program behavior, an interval assertion, or an API obligation.   
We let $\mathcal{C}$ denote the universe of claims used in the analysis, and we note that the core model is agnostic to the internal structure of the claim universe.
We let $\mathit{Ev}$ denote the universe of evidence records (presumably retrieved by the LLM agent).  For example, these can be documentation excerpts, advisory identifiers, source-code observations or tool outputs.
Evidence is accumulated alongside the assessments for provenance and auditability, but does not govern termination of the worklist procedure.

The assessment domain defines the space in which the framework summarizes its judgments about claims, given the evidence available to it.  More formally:
\begin{definition}[Assessment Domain]
    An \emph{assessment domain} is a finite-height join-semilattice
\[
\mathcal{A}
=
(A,\sqsubseteq,\sqcup,\bot_A).
\]
where each element of \(A\) summarizes the state of the evidence associated with a given claim.
\end{definition}

The $A_{\textsc{Graded}}$ lattice used in the motivating example is one concrete instantiation.
The finite-height requirement ensures that assessments can only be updated finitely many times, which is essential for termination.

The global state records the claims at each node, together with their current assessments and accumulated evidence, which are maintained by the LLM agent (e.g., using a scratchpad). More formally:

\begin{definition}[Global state]
The set $\mathbf{State}$ consists of maps
\[
  S : V^{*} \to
    \bigl(\mathcal{C} \rightharpoonup_{\!\mathit{fin}}
      (A \times \mathcal{P}_{\!\mathit{fin}}(\mathit{Ev}))\bigr)
\]
When $S(n)(c)=(a,E)$, claim $c$ at node $n$ currently has assessment~$a$ and accumulated evidence~$E$.
$\rightharpoonup_{\!\mathit{fin}}$ denotes that the domain of the partial map is finite, and $\mathcal{P}_{\!\mathit{fin}}$ denotes the finite subsets.
\end{definition}

The map \(S(n)\) is partial because each node carries only the claims that have been assigned to it during graph construction or generated at it during analysis.
When a new claim is generated at \(n\), it is initially inserted with assessment \(\bot_A\) and empty evidence before being evaluated.
We discuss claim generation, which did not appear in the motivating example, in the next subsection.
Although evidence is part of the state and may be made available to downstream or feedback-dependent nodes, the worklist trigger ignores evidence-only changes. The algorithm therefore stabilizes the active claims and their assessments, not the set of all evidence.

\subsection{Queries}

The core framework supports two kinds of queries.
\emph{Evaluation} queries serve as a rubric that tells the agent how to assess a claim (e.g., the bilateral query in Section~\ref{sec:context-queries-evidence}).
\emph{Generative} queries allow the LLM agent to generate candidate claims that are appropriate to the analysis problem at hand.
More formally, we say that an evaluation query $q_e \in \mathcal{Q}_e$ specifies how evidence should be sought, classified, and mapped to an element of~$\mathcal{A}$, where $\mathcal{Q}_e$ refers to the domain of possible evaluation queries.
A generative query $q_g \in \mathcal{Q}_g$ asks the agent to propose new claims relevant to the analysis, where $\mathcal{Q}_g$ refers to the domain of possible generative queries.
For each node $n$, a \emph{query specification} is a pair
\[
  \psi_n = (q_e,\; Q_g)
    \;\in\; \mathcal{Q}_e \times \mathcal{P}_{\!\mathit{fin}}(\mathcal{Q}_g),
\]
where $q_e$ is the evaluation rubric used at~$n$ and $Q_g$ is a finite set of generative queries.
We assume a map $\mathit{queries}: V^{*}\!\to \mathcal{Q}_e \times \mathcal{P}_{\!\mathit{fin}}(\mathcal{Q}_g)$ assigning specifications to nodes.
We use node-local queries for simplicity of presentation and to maintain generality.

\subsection{Contextualization and Node Transformers}
\label{sec:context-node-transformers}

This subsection defines how an individual node is processed. The role is analogous to a transfer function in abstract interpretation.
The first step is what we refer to as contextualization.

\paragraph{Contextualization.}
The function
\(
  \mathrm{ctx} : V^{*}\!\times\mathit{State}\to\Sigma^{*}
\)
assembles the prompt context for processing a node.  Given node $n$ and state~$S$,
\[
  \mathrm{ctx}(n,S)
  \;=\;
  \delta_n\!\bigl(
    \{\mathit{src}(m) \mid m \in N(n)\},\;
    \mathit{Goal},\;
    \{S(m) \mid m \in \mathit{Pred}_{\mathit{ext}}(n)\}
  \bigr),
\]
where $\mathit{Goal}$ is the analyst-provided high-level analysis goal and $\delta_n$ is a node-specific prompt constructor.  
This context assembly plays a role analogous to collecting information along incoming edges in chaotic iterations, though the aggregation here is prompt construction rather than a lattice join.
In practice, \(\delta_n\) may be a fixed template, a retrieval pipeline, an LLM
call, or a combination.

\paragraph{Agent interface.}
At each node, the agent exposes two operations:
\begin{align*}
  \mathrm{Gen}  &: \Sigma^{*}\times \mathcal{Q}_g \;\to\;
                    \mathcal{P}_{\!\mathit{fin}}(\mathcal{C}), \\
  \mathrm{Eval} &: \Sigma^{*}\times \mathcal{Q}_e \times \mathcal{C} \;\to\;
                    \mathcal{A} \times \mathcal{P}_{\!\mathit{fin}}(\mathit{Ev}).
\end{align*}

\(\mathrm{Gen}\) produces a finite set of candidate claims given a context string and a generative query. 
\(\mathrm{Eval}\) assesses an existing claim and returns both an assessment and a finite set of evidence records. For bilateral querying, as in the motivating example, \(\mathrm{Eval}\) may internally issue separate prompts for supporting and refuting evidence and then combine the results into a product-lattice element.
We assume that each call to the LLM terminates and returns a finite output.
If either operation returns malformed or uninterpretable responses, this will have to be addressed separately.

\paragraph{Node transformer.} The node transformer processes the node using $\mathrm{ctx}$, $\mathrm{Eval}$, and $\mathrm{Gen}$.
For each \(n \in V^*\), the node transformer $T_n:\mathbf{State}\to\mathbf{State}$ processes node $n$ as follows.
Given input state~$S$:
\begin{enumerate}
\item \textbf{Build context.}\;
  Set $\Gamma_n \leftarrow \mathit{ctx}(n,S)$ and retrieve
  $(q^e_n, Q^g_n) \leftarrow \mathit{queries}(n)$.
  Let $S'$ be a copy of~$S$.
 
\item \textbf{Evaluate existing claims.}\;
  For each $c\in\mathrm{dom}(S(n))$ with $S'(n)(c)=(a,E)$, compute
  $(a',E')\leftarrow\mathrm{Eval}(\Gamma_n,\,q^e_n,\,c)$
  and set $S'(n)(c)\leftarrow(a\sqcup a',\; E\cup E')$.
 
\item \textbf{Generate new claims.}\;
  For each $q_g\in Q^g_n$, compute
  $C_q\leftarrow\mathrm{Gen}(\Gamma_n,\,q_g)$.
  For each $c'\in C_q\setminus\mathrm{dom}(S'(n))$ accepted by the
  node-local claim-management policy (e.g., a policy that bounds the number of claims at a node and discards new claims if the limit is met), insert
  $S'(n)(c')\leftarrow(\bot_A,\,\varnothing)$.
 
\item \textbf{Evaluate new claims.}\;
  For each newly inserted $c'$, compute
  $(a',E')\leftarrow\mathit{Eval}(\Gamma_n,\,q^e_n,\,c')$
  and set $S'(n)(c')\leftarrow(a',\,E')$.
 
\item \textbf{Return} $S'$.
\end{enumerate}

\noindent
The transformer updates only the state at node \(n\); for all \(m \neq n\), \(S'(m)=S(m)\).
The join-based update ensures that assessments can only increase in the lattice order. Evidence is accumulated by union, but evidence-only changes do not add nodes to the worklist, as described below.

\subsection{Worklist Algorithm}
\label{sec:worklist}

The worklist algorithm (Algorithm~\ref{alg:worklist}) controls the global agentic interpretation procedure.
Propagation is driven by changes in the claims (i.e., assessments or the set of claims at a node) over the extended edges.
For a node-local state $M:\mathcal{C}\rightharpoonup_{\!\mathit{fin}}(\mathcal{A}\times\mathcal{P}_{\!\mathit{fin}}(\mathit{Ev}))$, we define the \emph{claim-assessment projection}
\[
  \mathit{AC}(M) : \mathcal{C}\rightharpoonup_{\!\mathit{fin}} \mathcal{A},
  \qquad
  \mathit{AC}(M)(c) = a
  \;\;\text{whenever}\;\; M(c)=(a,E).
\]

\begin{algorithm}[t]
\KwIn{Evaluation graph $G^*$, node transformers $\{T_n\}_{n\in V^*}$,
      initial state $S_0$.}
\KwOut{Stabilized state~$S$.}
$S \leftarrow S_0$\;
$W \leftarrow \{n \in V^* ~|~ |\mathrm{dom}(S_0(n))| > 0 \vee \text{$n$ has generative queries}\}$\;
\While{$W \neq \varnothing$}{
  remove some node $n$ from $W$\;
  $\mathit{old} \leftarrow \mathit{AC}(S(n))$\;
  $S \leftarrow T_n(S)$\;
  \If{$\mathit{AC}(S(n)) \neq \mathit{old}$}{
    $W \leftarrow W \cup \mathit{Succ}_{\mathit{ext}}(n)$\;
  }
}
\Return $S$\;
\caption{Worklist Harness}\label{alg:worklist}
\end{algorithm}

Algorithm~\ref{alg:worklist} terminates when the worklist is empty.
The returned state is stabilized with respect to claims and their assessments.
The returned state need not be evidence-stabilized in the stronger sense that no further evidence could ever be found, because evidence-only updates do not trigger reprocessing.

\begin{proposition}[Termination]\label{prop:term} Assume: (1)~$V^*$ is finite; (2)~$A$ has finite height~$H$; (3)~for each node $n$, at most $k_n$ claims can ever be active; (4)~claims, once introduced, are never revoked; (5)~each $T_n$ updates assessments only by join; and (6)~each call to $\mathit{Gen}$ and $\mathit{Eval}$ terminates with finite output. Then Algorithm~\ref{alg:worklist} terminates.
\end{proposition}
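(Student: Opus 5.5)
We prove termination with a standard potential-function argument for chaotic iteration. Each iteration of the loop is driven either by a \emph{productive} step, in which $\mathit{AC}(S(n))$ strictly changes, or by an \emph{unproductive} step, in which it does not. We show that only finitely many productive steps can occur, and that between productive steps the worklist strictly shrinks.

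\textbf{Step 1: each iteration terminates.} By assumption~(6), every call to $\mathrm{Eval}$ and $\mathrm{Gen}$ terminates with finite output. The transformer $T_n$ loops over the finitely many claims in $\mathrm{dom}(S(n))$, the finite set $Q^g_n$, and the finite sets $C_q$. Hence each application of $T_n$, and therefore each loop iteration, terminates.

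\textbf{Step 2: bound the productive steps.} Define the potential
\[
\Phi(S) \;=\; \sum_{n\in V^*}\Bigl(|\mathrm{dom}(S(n))| + \sum_{c\in\mathrm{dom}(S(n))} \mathrm{ht}\bigl(\mathit{AC}(S(n))(c)\bigr)\Bigr),
\]
where $\mathrm{ht}(a)$ is the length of the longest chain from $\bot_A$ to $a$, so that $\mathrm{ht}(a)\le H$.

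By assumption~(4), claims are never removed from $\mathrm{dom}(S(n))$. By assumption~(5), $a\sqcup a'\sqsupseteq a$ for existing claims. Hence both summands are monotone nondecreasing under $T_n$. One subtlety is step~4 of $T_n$: a new claim is set to $(a',E')$ rather than joined, but it starts at $\bot_A$, so $a'\sqsupseteq\bot_A$ and the update is still monotone.

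If $\mathit{AC}(S(n))$ changes, then either some claim was added, which raises $|\mathrm{dom}|$, or some existing assessment strictly increased. In the second case the height strictly grows, since $a\sqsubset a\sqcup a'$ implies $\mathrm{ht}(a)<\mathrm{ht}(a\sqcup a')$ in a finite-height lattice. Either way $\Phi$ increases by at least~$1$.

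By assumptions~(1)--(3), $\Phi\le \sum_n k_n(1+H)$, which is finite. So there are at most $B=\sum_n k_n(1+H)$ productive steps.

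\textbf{Step 3: bound the unproductive steps.} An unproductive step removes $n$ from $W$ and adds nothing, so $|W|$ strictly decreases. A productive step adds at most $|\mathit{Succ}_{\mathit{ext}}(n)|\le|V^*|$ nodes. Since $W\subseteq V^*$ is a set, $|W|\le|V^*|$ always. Therefore between consecutive productive steps, and after the last one, there are at most $|V^*|$ unproductive steps. The total number of iterations is thus at most $(B+1)(|V^*|+1)$. Alternatively, one can use the lexicographic measure $(B-\Phi,\,|W|)$, which strictly decreases in every iteration.

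\textbf{Main obstacle.} The delicate point is Step~2, namely making precise that a change in $\mathit{AC}$ yields a strict increase of $\Phi$. This requires ruling out non-monotone changes: claim revocation is excluded by assumption~(4), and overwriting is excluded by assumption~(5) together with the fact that new claims start at $\bot_A$. It also requires that the total number of claim insertions is bounded. This follows from assumptions~(3) and~(4): since claims are never revoked, at most $k_n$ distinct claims are ever inserted at $n$. Evidence-only changes are harmless because the trigger compares only $\mathit{AC}$.
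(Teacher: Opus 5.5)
Your proof is correct and takes the same route as the paper: bound the worklist-triggering events by $\sum_n k_n(1+H)$ (claim insertions plus strict assessment increases, using non-revocation and join-only updates), then conclude that the worklist empties. Your potential $\Phi$, your treatment of the overwrite in step~4 of $T_n$, and the lexicographic measure $(B-\Phi,|W|)$ make rigorous what the paper compresses into ``each event enqueues finitely many nodes''.
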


\begin{proof}
Let $K_{\mathit{cl}}=\sum_{n\in V^*} k_n$.
Since claims are never revoked, at most $K$ claims are ever active across the entire graph.  Each active claim's assessment can strictly increase at most $H$ times, giving at most
$H\!\cdot\! K_{\mathit{cl}}$ strict assessment increases in total.
The algorithm adds successors to the worklist only when $\mathit{AC}(S(n))$ changes, which requires either a new claim or a strict increase. 
Hence, there are at most $K_{\mathit{cl}} + H\!\cdot\! K_{\mathit{cl}}$ worklist-triggering events, each enqueueing finitely many nodes.
Combined with the finite initial worklist and the termination of each agent call, the worklist eventually empties, and the algorithm terminates.
\end{proof}

In practice, the bound \(k_n\) can be enforced by having \(T_n\) discard newly generated claims once the node-local limit has been reached, or choose which to keep. This process itself could be governed by an LLM agent.
Proposition~\ref{prop:term} proves finite stabilization of the claim-and-assessment projection, but it does not assert a least fixed point.
That would require deterministic, monotone node transformers and fair scheduling, which are conditions current LLM agents do not satisfy in general.
Techniques such as fixed prompts, greedy decoding, and memoization can reduce nondeterminism but do not eliminate it; we therefore state only termination here, and leave exploration of this direction for future work.
\section{Facets of the Design}
The core model introduced in Section~\ref{sec:core-model} is parametric.
Each component: the assessment lattice, the queries, claims, evidence, worklist policy, and the evaluation graph itself admits a wide range of design choices that affect the cost, expressiveness, and interpretability of the analysis.
This section discusses the main design choices in more detail.

\begin{figure}[t]
\centering
\scalebox{0.85}{
\begin{minipage}{\textwidth}
\centering
 
\begin{tikzpicture}[
  scale=0.65,
  every node/.style={font=\footnotesize},
  dot/.style={circle, fill, inner sep=1.3pt}
]
  \node[dot, label=below:{$\langle \mathit{absent}, \mathit{absent} \rangle$}] (bot) at (0, 0.5) {};
  \node[dot, label=left:{$\langle \mathit{present}, \mathit{absent} \rangle$}]  (sup) at (-1.5, 2) {};
  \node[dot, label=right:{$\langle \mathit{absent}, \mathit{present} \rangle$}] (ref) at (1.5, 2) {};
  \node[dot, label=above:{$\langle \mathit{present}, \mathit{present} \rangle$}] (top) at (0, 3.5) {};
  \draw (bot) -- (sup);
  \draw (bot) -- (ref);
  \draw (sup) -- (top);
  \draw (ref) -- (top);
\end{tikzpicture}
 
\smallskip
{\small (a) $A_{\textsc{Four}}$.}
 
\bigskip
 
\begin{tikzpicture}[
    scale=0.7,
    >=Stealth,
    gridnode/.style={circle, inner sep=1.2pt, fill=gray!20, draw=gray!50},
    purpledot/.style={circle, inner sep=3pt, fill=purple!70!black},
    greendot/.style={circle, inner sep=3pt, fill=green!50!black},
  ]
 
  \def\panelgap{3.4}
  \def\gs{0.9}
 
  \foreach \klabel/\px/\py/\gx/\gy [count=\i from 0] in {
    {model}/1/1/2/2,
    {located}/1/0/2/1,
    {applicable}/0/0/1/0,
    {corroborated}/0/0/0/0,
    {checked}/0/0/0/0%
  } {
    \begin{scope}[xshift=\i*\panelgap cm]
 
      \begin{scope}[on background layer]
        \fill[gray!6] (0,0) rectangle (2*\gs, 2*\gs);
      \end{scope}
 
      \draw[gray!25, thin] (\gs,0) -- (\gs,2*\gs);
      \draw[gray!25, thin] (2*\gs,0) -- (2*\gs,2*\gs);
      \draw[gray!25, thin] (0,\gs) -- (2*\gs,\gs);
      \draw[gray!25, thin] (0,2*\gs) -- (2*\gs,2*\gs);
 
      \draw[gray!60] (0,0) rectangle (2*\gs, 2*\gs);
 
      \foreach \x in {0,1,2} {
        \foreach \y in {0,1,2} {
          \node[gridnode] at (\x*\gs, \y*\gs) {};
        }
      }
 
      \node[below, font=\scriptsize, gray!80!black] at (0, -0.15) {$\bot$};
      \node[below, font=\scriptsize, gray!80!black] at (\gs, -0.15) {$w$};
      \node[below, font=\scriptsize, gray!80!black] at (2*\gs, -0.15) {$s$};
 
      \ifnum\i=0
        \node[left, font=\scriptsize, gray!80!black] at (-0.15, 0) {$\bot$};
        \node[left, font=\scriptsize, gray!80!black] at (-0.15, \gs) {$w$};
        \node[left, font=\scriptsize, gray!80!black] at (-0.15, 2*\gs) {$s$};
 
        \draw[->, gray!70!black, thick] (-0.8, 0) -- (-0.8, 2*\gs);
        \node[left, font=\small, rotate=90, anchor=south] at (-0.9, \gs) {\textsf{refuting}};
 
        \draw[->, gray!70!black, thick] (0, -0.8) -- (2*\gs, -0.8);
        \node[below, font=\small] at (\gs, -0.9) {\textsf{supporting}};
      \fi
 
      \node[above, font=\small] at (\gs, 2*\gs + 0.25) {\klabel};
 
      \node[purpledot] (p\i) at (\px*\gs, \py*\gs) {};
      \pgfmathtruncatemacro{\overlap}{\px==\gx && \py==\gy ? 1 : 0}
      \ifnum\overlap=1
        \node[greendot] (g\i) at (\gx*\gs + 0.15, \gy*\gs + 0.15) {};
      \else
        \node[greendot] (g\i) at (\gx*\gs, \gy*\gs) {};
      \fi
 
      \ifnum\i=0
        \node[left=2pt of p\i, font=\small\bfseries, purple!70!black] {$x$};
        \node[right=2pt of g\i, font=\small\bfseries, green!50!black] {$y$};
      \fi
 
    \end{scope}
  }

  \foreach \a/\b in {0/1, 1/2, 2/3, 3/4} {
    \draw[->, purple!50!black, thick, dashed, shorten >=4pt, shorten <=4pt]
      (p\a) -- (p\b);
  }
  \foreach \a/\b in {0/1, 1/2, 2/3, 3/4} {
    \draw[->, green!40!black, thick, dashed, shorten >=4pt, shorten <=4pt]
      (g\a) -- (g\b);
  }
 
  \draw[->, thick, gray!70!black] (-0.3, -1.6) -- (4*\panelgap + 2*\gs + 0.3, -1.6);
  \node[below, font=\small, gray!70!black] at (2*\panelgap + \gs*0.5, -1.75) {%
    increasing confidence threshold $k \in K$};
 
  \path (4*\panelgap + 2*\gs + 1.4, 0);
\end{tikzpicture}
 
\smallskip
{\small (b) Example elements $x (\purpmark), y (\greenmark) \in A_{\textsc{Stratified}}$ such that $x \sqsubseteq_{A_{\textsc{Stratified}}} y$.
Each panel fixes a confidence threshold $k$, and each colored point $(\purpmark, \greenmark)$, represents $(b_s(k), b_r(k))$, where the horizontal axis is support strength and the vertical axis is refutation strength.
Dashed arrows show the same element as $k$ increases, and demonstrate that both $x$ and $y$ are antitone in $k$.
}
 
\end{minipage}%
}
 
\caption{Alternative assessment domains.}

\label{fig:assessment-domains}
\end{figure}

\subsection{Choosing the Assessment Lattice}
\label{sec:choosing-lattice}

The assessment lattice determines how many distinct judgments the framework can express about a claim.
A coarser lattice is cheaper to operate and easier to interpret, but a finer lattice preserves more distinctions and requires the LLM agent to make granular judgments reliably.
A simple bilateral option is a two-bit evidence-presence lattice ($A_{\textsc{Four}}, $ Figure~\ref{fig:assessment-domains}a): one bit records whether supporting evidence has been observed, and the other records whether refuting evidence has been observed.
This domain suffices when the analysis only needs to know whether each polarity has appeared, but it cannot distinguish weak evidence from strong evidence.  Here, polarity refers to supporting or refuting evidence.
$A_{\textsc{Graded}}$ (Section~\ref{sec:graded-domain}) refines presence to three levels per polarity ($\bot \sqsubseteq w \sqsubseteq s$), distinguishing absent, weak, and strong evidence on each side.
This is expressive enough when all evidence is of comparable reliability, but it does not track \emph{why} the evidence should be trusted. A claim backed by an LLM-only judgment and a claim backed by a human-audited finding could both register as strong support in $A_{\textsc{Graded}}$.
To recover this distinction, we introduce a finite total order over \emph{confidence-bases}, $K$:
\[
  \mathit{model} \;\sqsubseteq_K\; \mathit{located} \;\sqsubseteq_K\; \mathit{applicable} \;\sqsubseteq_K\; \mathit{corroborated} \;\sqsubseteq_K\; \mathit{checked}.
\]
Here, \emph{model} denotes an LLM-only judgment with no independently inspectable evidence; \emph{located} denotes a judgment backed by a concrete evidence record; \emph{applicable} denotes evidence confirmed to apply to the specifics of the program under analysis (e.g., a specific version of a library); \emph{corroborated} denotes agreement among independent applicable sources; and \emph{checked} denotes validation by source code inspection, a deterministic tool, verifier, or human audit.
$k \sqsubseteq_K k'$ means that evidence with basis $k'$ is considered at least as trustworthy as $k$.

A na\"{\i}ve product $A_{\textsc{Graded}} \times K$ attaches one basis value to the entire support-refutation pair, but this can be misleading.
Strong $\mathit{model}$ support paired with weak $\mathit{checked}$ refutation would make the whole assessment appear ``$\mathit{checked}$.''  
A more precise domain should instead track the basis of each polarity separately.
In what follows, let $G = \{\bot, w, s\}$ with $\bot \sqsubseteq_G w \sqsubseteq_G s$ denote the evidence-strength chain used in $A_{\textsc{Graded}}$.
Tracking a single $(g, k) \in G \times K$ per polarity does not resolve the issue: joining $(s, \mathit{model})$ with $(w, \mathit{checked})$ (e.g., in a component-wise join) yields $(s, \mathit{checked})$, falsely attributing strong evidence to a checked source.

We thus introduce the domain $A_{\textsc{Stratified}}$.
Suppose the framework has gathered a finite set of evidence records bearing on one polarity of a claim, each carrying a strength $g_i \in G$ and a confidence basis $k_i \in K$.
The \emph{basis-indexed summary} for that polarity is the map $b : K \to G$ defined by
\[
  b(k) \;=\; \bigsqcup_{\{i \,:\, k \,\sqsubseteq_K\, k_i\}} g_i,
\]
where the empty join is equal to~$\bot_G$.
Thus, $b(k)$ is the strongest evidence available after discarding all records whose confidence basis falls below threshold $k$.
Raising the confidence threshold can only remove records, so $b$ is antitone ($k \sqsubseteq_K k' \implies b(k') \sqsubseteq_G b(k)$).
Let $K \to_{\downarrow} G$ denote the join-semilattice of antitone maps from~$K$ to~$G$, ordered pointwise, with join $(b \sqcup_{K \to_{\downarrow} G} b') (k) = b(k) \sqcup_G b'(k)$.
The \emph{stratified assessment} of a single claim is a pair of such maps, one for each polarity:
\[
  A_{\textsc{Stratified}} \;=\;
    \underbrace{(K \to_{\downarrow} G)}_{\text{support}} \;\times\;
    \underbrace{(K \to_{\downarrow} G)}_{\text{refutation}},
\]
also ordered and joined pointwise.
The assessment of a claim $c$ in $A_{\textsc{Stratified}}$ records, separately for refutation and support, the strongest evidence available at each confidence threshold.
The partial order is defined element-wise: $\langle b_s, b_r \rangle \sqsubseteq_{A_{\textsc{Stratified}}} \langle b_s', b_r'\rangle \iff \forall k \in K ~.~ b_s(k) \sqsubseteq_G b'_s(k) \wedge b_r(k) \sqsubseteq_G b'_r(k)$.  An example of two such elements is shown in Figure~\ref{fig:assessment-domains}b.
Join is defined as $\langle b_s, b_r \rangle \sqcup_{\mathcal{A}_{\textsc{Stratified}}} \langle b_s', b_r'\rangle = \langle b_s \sqcup_{K \to_{\downarrow} G} b_s', b_r \sqcup_{K \to_{\downarrow} G} b_r' \rangle$.
$\bot_{A_{\textsc{Stratified}}}$ is defined as $\langle \bot_{K \to_{\downarrow} G}, \bot_{K \to_{\downarrow} G} \rangle$, where $\bot_{K \to_{\downarrow} G}(k) = \bot_G$ for all $k \in K$.
We note that antitonicity is a property of the individual elements, not of the analysis dynamics.
A claim's assessment still evolves monotonically throughout the analysis, via polarity-wise join. 
Furthermore, we note that the previous issue of collapsing $(s, \mathit{model})$ and $(w, \mathit{checked})$ into $(s, \mathit{checked})$ does not occur.  Instead, for example, for supporting evidence $\mathcal{A}_{\textsc{Stratified}}$ represents these concepts separately: $b_s(\mathit{model}) = s$ and $b_s(\mathit{checked}) = w$.

\subsection{Claims}

The core model leaves the claim universe $\mathcal{C}$ unstructured.
Three considerations could guide instantiation.

\paragraph{Ordered claim families.}
Claim contents may inhabit an ordered domain independent of the assessment lattice, such as interval inclusion for numeric bounds or set inclusion for taint sources.  Such structure enables detecting subsumption among generated claims or replacing related claims with a coarser representative.

\paragraph{Canonicalization.}
Claims generated in natural language may carry the same information in many different forms.
Without normalization, the harness could treat equivalent claims as distinct, duplicating work and fragmenting evidence.
Implementations should canonicalize claims before insertion (e.g., via templates or an LLM judge), merging them when they have the same semantic meaning.

\paragraph{Granularity and scope.}
Claims that are too coarse (e.g., ``the verifier is safe'') conflate distinct obligations and resist focused evaluation, while claims that are too fine proliferate near-redundant obligations and inflate worklist cost.  
In practice, this task is dependent on how well an LLM agent is able to synthesize claims, in terms of diversity and scope and how well it is able to deconstruct high-level goals into well-scoped claims.

\subsection{Query Design}

The evaluation query determines how the agent produces the judgments recorded in the lattice.  For any bilateral lattice, it naturally decomposes into a support subquery and a refutation subquery (Section~\ref{sec:context-queries-evidence}).
Whether these are issued in a single prompt or separately is an implementation choice that trades LLM calls against clarity and potential anchoring effects~\cite{DBLP:journals/corr/abs-2511-05766,DBLP:journals/expert/OLeary25a,DBLP:journals/jocss/LouS26}.
With the stratified domain $A_{\textsc{Stratified}}$, the query should ask the agent to return structured evidence records, where each has a polarity, strength grade, and confidence basis, instead of a single lattice element.
The harness could then aggregate them via pointwise join as described in Section~\ref{sec:context-node-transformers}.
For simpler lattices such as $A_{\textsc{Graded}}$, the agent can report strength directly.
On the other hand, generative queries ask the agent to propose new claims.  Because the per-node claim bound (Section~\ref{sec:worklist}) enforces termination, the query should signal parsimony: only claims genuinely relevant to the analysis goal.
An example of a generative query could be ``Generate a claim that states what interval range variable $x$ should be in.''
The dynamics of such queries, how their results could be updated (e.g., with standard interval joins) over time, and how the generative queries might have to change throughout the worklist procedure, is something we leave to future work to explore.

\subsection{Evidence and Provenance}

The lattice captures a compact judgment, but for the result to be inspectable, the evidence behind each judgment must be retained.
Evidence records could carry the structure defined in Section~\ref{sec:choosing-lattice} (i.e., polarity, strength, and (for $A_{\textsc{Stratified}}$) confidence basis) along with source material (e.g., documentation excerpt, advisory identifier, code observation, or tool output) and the claim and node to which they pertain.
Instantiations that support belief revision (Section~\ref{sec:revision}) should additionally track whether each record is active, superseded, or retracted.
In Section~\ref{sec:core-model}, we assumed that evidence-only updates do not trigger worklist propagation.
However, in an implementation or an alternative formal model we may want to consider this to account for new evidence that may not change the evaluation of the current claims at a node, but may impact others.

\subsection{Worklist Order}

In classical dataflow analysis, worklist order affects performance but not the final result.
Agentic interpretation does not enjoy this guarantee: the LLM's response may depend on the context available at query time, making worklist order a design choice that can affect result quality, not just cost.
Several policies merit exploration.
A \emph{weak topological ordering} policy could be used~\cite{DBLP:conf/ershov/Bourdoncle93}.
A \emph{goal-directed} policy begins at a goal node and adds predecessors as needed to the worklist.
A \emph{feedback-sensitive} policy prioritizes recently refuted nodes, since refutation may indicate that upstream investigation should be narrowed.
The ordering could itself be delegated to an LLM agent.

\subsection{Assessment and Claim Revision}
\label{sec:revision}

Algorithm~\ref{alg:worklist} moves assessments only upward via join.
Two situations motivate downward revision: (1)~later evidence may reveal that an earlier judgment was based on inapplicable or misattributed evidence, and the corresponding assessment component should be lowerable; (2)~continued investigation may show that a claim was poorly scoped or redundant, warranting its removal.  
Both forms break the inflationary invariant on which Proposition~\ref{prop:term} relies.
Two disciplined alternatives preserve termination:
 
\paragraph{Bounded replacement.}
Each node is allowed at most $I_{\max}$, $R_{\max}$, and $D_{\max}$ claim introductions, claim retractions, and downward assessment moves, respectively. 
Between revisions, the standard join-based update applies.
Termination follows because finitely many revision events can occur, each re-enabling at most~$H$ subsequent upward steps.
 
\paragraph{Epochal recomputation.}
Revision is separated from accumulation.
When the worklist empties and the current epoch stabilizes, the harness identifies assessments to lower and claims to retract, records old values with the reason for revision, resets affected assessments to~$\bot_A$, removes retracted claims, and re-seeds the worklist for a new epoch.  Within each epoch, Proposition~\ref{prop:term} applies directly.
Termination of the multi-epoch procedure requires a separate mechanism such as a global epoch limit. \\

\noindent
In recent work, Jenkins~\cite{jenkins2026agmbench} demonstrates that frontier LLMs struggle to revise their beliefs rationally.
In future work, we intend to explore whether agentic interpretation can ameliorate some of the issues described by Jenkins.

\subsection{Evaluation Graph Synthesis}

The preceding subsections assume that the evaluation graph is given. 
In practice, deciding which auxiliary nodes to introduce, which claims and queries to assign, and which extended edges to draw is itself a significant design task.
An LLM could propose auxiliary nodes, claims, and extended edges from the high-level goal, with the harness bounding and auditing the result (e.g., limiting the number of auxiliary nodes and outgoing extended edges per node).
Dynamic graph refinement during the worklist computation is also possible, but the effect on termination must be carefully managed: if the graph can grow during analysis, the termination argument of Proposition~\ref{prop:term} must be extended to account for the additional claims and edges that each refinement step introduces.
\section{Related Work}

Agentic interpretation is most related to work that uses LLMs inside program-analysis procedures.
We organize the discussion in this section around the role assigned to the LLM in such systems.

\paragraph{LLMs as generators for checkable artifacts.}
Many LLM-assisted verification works follow a generate-and-check pattern.
Examples include invariant generation~\cite{DBLP:conf/icml/PeiBSSY23,wei2026quokka}, proof generation and repair~\cite{DBLP:conf/sigsoft/FirstRRB23}, postcondition generation~\cite{DBLP:journals/pacmse/EndresFCL24}, formal contract inference~\cite{DBLP:journals/corr/abs-2510-12702}, and combining static analysis and deductive verification with LLMs to generate and refine formal specifications~\cite{DBLP:journals/corr/abs-2508-14532}.
These systems have correctness guarantees; the LLM's output is consumed by a formal checker or validity is enforced by the system.
Agentic interpretation instead targets evidence-dependent analyses in which the relevant facts may come from sources that are difficult to formalize and where no complete verifier exists.
In this setting, the framework cannot make the LLM's conclusion sound; instead, it makes the LLM's evidential role explicit, bounded, and auditable.

\paragraph{LLM-assisted static analysis.}
Recent works use LLMs inside static-analysis procedures to supply local judgments, specifications, or domain knowledge that would otherwise require manual engineering.
LLift uses an LLM to answer localized path-feasibility questions when triaging use-before-initialization reports~\cite{DBLP:journals/pacmpl/LiHZQ24}.
IRIS uses LLMs to infer taint specifications for CodeQL, allowing the analyzer to reason about library behavior not captured by built-in rules~\cite{DBLP:conf/iclr/Li0N25}. 
NESA leverages LLM-derived semantic relations and parser-derived syntactic facts within a Datalog-style policy framework to enable customizable, compilation-free static program analysis~\cite{wang2026nesa}, while LATTE constructs LLM prompt sequences for binary taint analysis from sliced code contexts~\cite{DBLP:journals/corr/abs-2310-08275}. 
E\&V prompts LLMs to simulate pseudocode execution and check intermediate analysis steps~\cite{DBLP:journals/corr/abs-2312-08477}.
Chapman et al. interleave static-analysis intermediate results with LLM prompts for error-specification inference~\cite{DBLP:journals/sttt/ChapmanRT25}.
ZeroFalse uses LLMs to adjudicate SAST warnings using structured traces and CWE-specific context~\cite{DBLP:journals/corr/abs-2510-02534}.
AbsInt-AI uses language models to improve the precision of an abstract interpreter while retaining the guarantees of the underlying analyzer~\cite{wang2025absintai}.
These systems are the closest predecessors of agentic interpretation.
Specifically, they leverage LLMs to answer local program-analysis questions that may be difficult to answer otherwise.
The key difference between these lines of work and agentic interpretation is that each system mentioned fixes a particular analysis task and a particular role for the LLM.
Instead, agentic interpretation abstracts the surrounding control structure, and focuses on formal scaffolding to guide LLM-based reasoning.
Soundness is not necessarily guaranteed in the framework; instead, the focus is on turning many local LLM judgments into a bounded, inspectable analysis state.

\paragraph{Structured agentic code reasoning.}
General structured-reasoning frameworks also inform our design, but they solve a different problem.
ReAct interleaves reasoning traces with tool actions~\cite{DBLP:conf/iclr/YaoZYDSN023}, and Graph of Thoughts represents intermediate reasoning units as a graph with feedback~\cite{DBLP:conf/aaai/BestaBKGPGGLNNH24}.
More generally, these frameworks organize how an LLM explores a problem, whereas agentic interpretation also organizes what the analysis has established and focuses on how the analysis state stabilizes w.r.t. a finite assessment domain.
Ugare and Chandra's agentic code reasoning is closer to our setting. Specifically, their semi-formal certificates require agents to state premises, trace execution paths, and derive conclusions before committing to a code-analysis judgment~\cite{DBLP:journals/corr/abs-2603-01896}.
Agentic interpretation is complementary.
Semi-formal certificates constrain the structure of a single agent’s reasoning trace for a task instance, while agentic interpretation constrains the system-level evolution of claim judgments.
That is, how many claims are active, how their assessments are updated by join, how evidence records are retained, and how assessment changes propagate across dependent program claims.
We believe that combining the two directions could lead to interesting future work.

\paragraph{Principled neurosymbolic reasoning.}
Recent neurosymbolic frameworks aim to replace ad hoc chains of LLM calls with models that preserve some discipline from symbolic reasoning.
Bembenek proposes Neurosymbolic Transition Systems, in which symbolic state is paired with neural ``intuition'' so that an LLM can guide a symbolic algorithm while the overall computation inherits properties of the symbolic transition system~\cite{DBLP:journals/corr/abs-2507-05886}.
Allen et al. integrate an LLM into the interpretation function of a paraconsistent logic, issuing separate verification and refutation queries for atomic formulas while preserving the soundness and completeness of the surrounding tableau calculus~\cite{DBLP:conf/nesy/AllenCFIG25}.
The difference between these works and agentic interpretation is that these frameworks begin with a symbolic calculus or transition system and insert LLM calls into it.
Agentic interpretation begins with an evidence-dependent program analysis goal for which no complete symbolic decision procedure is assumed.
Moreover, the soundness of these systems follows from the underlying soundness of the symbolic substrate upon which they are built.
Agentic interpretation assumes no such substrate and instead focuses on developing a lattice-based backbone for LLM code reasoning.

\paragraph{Abstract interpretation, bilattices, and revision.}
The formal control structure of agentic interpretation borrows from classical abstract interpretation and dataflow analysis~\cite{DBLP:conf/popl/CousotC77,DBLP:conf/popl/CousotC79,DBLP:conf/popl/Kildall73}. 
In classical abstract interpretation, the lattice element denotes an abstraction of concrete program states and supports a soundness theorem relating abstract and concrete semantics.
In agentic interpretation, the lattice element denotes the current evidential status of a claim. Stabilization therefore means that the process has reached a bounded stabilized point of recorded LLM assessments, not that it has computed a sound over-approximation of program behavior.
The bilateral assessment domains used in this paper are related to Belnap-style and bilattice semantics, where support for truth and support for falsity can be represented independently~\cite{DBLP:journals/jlp/Fitting91}.
Related multi-valued ideas have been used in program analysis, including three-valued shape analysis and multi-valued model checking~\cite{DBLP:journals/toplas/SagivRW02,DBLP:journals/fmsd/ChechikGDLE06}.
However, our use is epistemic rather than semantic. That is, a value such as strong support and weak refutation describes what evidence has been observed for a claim, not that the underlying program property is simultaneously true and false.
Finally, the controlled revision mechanisms in Section~\ref{sec:revision} are related to AGM belief revision and truth-maintenance systems, which track justifications for assertions that may later be retracted~\cite{DBLP:journals/ai/Doyle79,DBLP:journals/ai/Kleer86,DBLP:journals/jsyml/AlchourronGM85}.
\section{Limitations, Future Work, and Open Questions}

This paper presents a formal framework and a detailed worked example, but does not include an implementation or evaluation.
Building a system that instantiates the framework is the most immediate next step.
Beyond implementation, three directions to investigate stand out.
First, the framework structures the analysis and guarantees termination (under certain conditions), but cannot compensate for an agent that consistently misjudges evidence; it makes such errors visible rather than correcting them.
Second, programs may contain regions where sound analysis is available, and a natural direction is to use formal tools at nodes where they apply and LLM-backed queries where they do not, though ensuring coherent interaction between formally verified and LLM-assessed evidence requires further investigation.
Third, how the framework scales to larger programs with many more components and claims remains to be investigated.
\vspace{-5pt}
\section{Conclusion}
This paper introduces agentic interpretation, a framework that brings the structural discipline of lattice-based static analysis to LLM-driven program reasoning.
We formalized the core model, including the evaluation graph, assessment domain, node transformers, and worklist algorithm, and proved termination for finite-height lattice instantiations with bounded claim generation.
We further characterized the design space: the choice of assessment domain, claim representation and canonicalization, query design, worklist ordering, evidence provenance, and controlled forms of belief revision.
Our hope is that this paper inspires future discussions about designing formally-grounded techniques for agentic systems in the context of program analysis that go beyond generate-and-check and apply in settings where traditional verification methods are inapplicable.

\bibliographystyle{splncs04}
\bibliography{references}
\end{document}